\documentclass[12pt,reqno]{amsart}

\usepackage{epsfig}
\usepackage{amsmath, amsthm, amsfonts}
\usepackage{graphicx}
\usepackage{amssymb}
%
%

\headheight=6.15pt
\textheight=8.75in
\textwidth=6.5in
\oddsidemargin=0in
\evensidemargin=0in
\topmargin=0in

\DeclareMathOperator {\esssup}{ess\,sup}
\DeclareMathOperator {\essinf}{ess\,inf}

\newcommand{\ga}{\gamma}
\newcommand{\Ga}{\Gamma}
\newcommand{\La}{\Lambda}

\newcommand{\ep}{\varepsilon}
\newcommand{\de}{\delta}

\newcommand{\f}{\phi}
\newcommand{\sg}{\sigma}

\newcommand{\ta}{\theta}

\newtheorem{theo}{{\sc \bf Theorem}}[section]

\newtheorem{lem}[theo]{{\sc \bf Lemma}}

\numberwithin{equation}{section}
\begin{document}

\title[Cloud computing and hyperbolic Voronoi diagrams on the sphere]
{Cloud computing and hyperbolic Voronoi diagrams on the sphere}
\pagestyle{plain}

\author{Pavel Bleher}
\address{Department of Mathematical Sciences,
Indiana University-Purdue University Indianapolis,
402 N. Blackford St., Indianapolis, IN 46202, U.S.A.}
\email{bleher@math.iupui.edu}

\author{Caroline Shouraboura}
\address{Forest Ridge \@
              9927 171st Ave SE, Newcastle, WA 98059 }
\email{carolinesh@forestridge.org}

\thanks{The first author is supported in part
by the National Science Foundation (NSF) Grant DMS-0969254.}

\date{\today}

\begin{abstract} 
In this work we study the minimization problem for the total distance in a cloud computing 
network on the sphere. We prove that  a solution to this problem is given in terms of a hyperbolic 
Voronoi diagram on the sphere. We present results of computer simulations illustrating
the solution.
\end{abstract}

\maketitle

\section{Introduction}

We consider a network 
\[
\mathcal S=\{S_1,\ldots,S_n\}
\]
 of computer data centers on the earth surface. 
We assume that the earth surface is represented by a sphere $\La_R$ of a radius $R>0$, and each data 
center $S_m$ is described by its geographical coordinates, the latitude $\f_m$ and
the longitude $\ta_m$. The data centers provide a variety of
cloud computing services to a big set of users 
\[
\mathcal U=\{U_1,\ldots,U_N\}.
\]
 Our main goal will be to solve the {\it minimization problem for the total distance} $\mathcal D$ between the data 
centers and the users. We assume that each user $U_k$ is assigned to a data center $S_{j(k)}$, 
and the total distance is the sum of the distances between the users and the 
corresponding data centers,
\begin{equation}\label{in1}
\mathcal D(j)=\sum_{k=1}^N d(S_{j(k)},U_k),
\end{equation}
where $j=j(k)$ is the function assigning the user $U_k$ to the data center $S_{j(k)}$.

For any two points $x,y$ on the sphere we denote $d(x,y)$ the distance between $x$ and $y$ on the sphere,
that is the length of the arc of the great circle connecting $x$ to $y$.
We will consider the problem of the minimization of the total distance under certain constraints.
We will assume that each data center $S_j$ has a capacity $C_j$ which characterizes how many
users it can service. 
The assumption of the limited capacity for the data centers leads to the minimization
problem with constraints: Find an assignment function $j_0(k)$ such that   
\begin{equation}\label{in5}
\min_{j\in \mathcal J(C)} \mathcal D(j)=\mathcal D(j_0),
\end{equation}
where 
\begin{equation}\label{in6}
\mathcal J(C)=\big\{j=j(k)\;\big|\;N_m=\#\{k:\; j(k)=m\}\le C_m\quad \textrm{for}\quad m=1,\ldots,n\big\}. 
\end{equation}
Observe that $N_m=\#\{k:\; j(k)=m\}$ is just the number of users assigned to the data center $S_m$. 

Since the number of users is usually large, it can be useful to consider a 
(nonnegative) measure $\mu$ 
of users on the sphere. In this case, the total distance functional looks like
\begin{equation}\label{in7}
\mathcal D(j)=\int_{\La_R} d(S_{j(U)},U)\,d\mu(U),
\end{equation}
where $j(U)$ is the assignment function of the users to the data centers.
Respectively, constraint \eqref{in6} takes the form
\begin{equation}\label{in8}
\mathcal J(C)=\left\{j=j(U)\;\big|\;\mu \{U:\, j(U)=m\}\le C_m\quad \textrm{for}\quad m=1,\ldots,n\right\}. 
\end{equation}

The main result of this work is Theorem 2.1, in which we prove that the solution to minimization problem 
\eqref{in5} is given by a hyperbolic Voronoi diagram on the sphere. 
The classical Voronoi diagrams have numerous applications to science and technology (for many of 
these applications see the review paper \cite{aur} by
Aurenhammer, the lectures \cite{mol} by M\o ller, the collection of papers \cite{Gav}, the monograph \cite{OBS} of Okabe, 
Boots, Sugihara, and  Chiu, and references therein). For applications of Voronoi diagrams 
to cloud computing see the recent paper \cite{SB} of Shouraboura and Bleher and references therein. In our
Theorem 2.1 we show that the presence of the constraints leads to an extension of the classical Voronoi
diagrams to hyperbolic Voronoi diagrams on the sphere. It is interesting to notice that 
in a completely different setting  similar diagrams on the plane or in the 3D space appear 
in crystallography as the Johnson-Mehl model of the crystal growth \cite{JM}. We formulate and prove Theorem 2.1 in Section \ref{CMP}. In the subsequent section we illustrate Theorem 2.1 by results of computer simulations.   

\section{Solution of the constrained minimization problem and  
hyperbolic Voronoi diagrams on the sphere }\label{CMP}

In this section we will solve minimization problem \eqref{in5}. We will consider a general
measure $\mu$ of users. For a given measurable assignment function $j(U)$ on $\La_R$, we introduce the sets  
\begin{equation}\label{cm1}
\sg_m(j)=\big\{U\in\La_R:\; j(U)=m\big\},\qquad  m=1,\ldots,n,
\end{equation}
so that $\sg_m(j)$ is the set of users assigned to the data center $S_m$.
The sets $\{\sg_m(j)\}$ are disjoint and they form a measurable partition $\pi(j)$
of the sphere $\La_R$, so that
\begin{equation}\label{cm2}
\pi(j):\;\La_R=\bigsqcup_{m=1}^n \sg_m(j).
\end{equation}
The functional $\mathcal D(j)$ reduces to
\begin{equation}\label{cm3}
\mathcal D(j)=\sum_{m=1}^n \int_{\sg_m(j)} d(S_{m},U)\,d\mu(U),
\end{equation}
and constraint \eqref{in8} to
\begin{equation}\label{cm4}
\mathcal J(C)=\left\{j=j(U)\;\big|\;\mu \left(\sg_m(j)\right)\le C_m\quad \textrm{for}\quad m=1,\ldots,n\right\}. 
\end{equation}

Let us begin with the unconstrained minimization problem. 
In the absence of the capacity constraint, minimization 
problem \eqref{in5} can be solved by assigning each user $U$ to its closest data center $S_{j}$,
so that $S_{j_0(U)}$ is the closest data center to $U$.
Geometrically, we partition the sphere $\La_R$ into the cells $\{\sg_m\}$ of the Voronoi diagram 
$V$ on the sphere with the points $\{S_m\}$.
The cell $\sg_m$ is defined as the set of points on $\La_R$ which are closer (or at the same distance) to $S_m$ 
than to other $S_l$'s, that is
\begin{equation}\label{cmp1}
\sg_m=\big\{x\in\La_R:\; d(x,S_m)\le d(x,S_l)\;\textrm{for all}\; l\not=m\big\}.
\end{equation}
Observe that the cells $\{\sg_m\}$ are convex spherical polygons on the sphere $\La_R$. With the Voronoi diagram
we associate a graph $\Ga_V$.
The vertices  of the graph $\Ga_V$ are the vertices of the polygons $\{\sg_m\}$, 
and the edges of the graph $\Ga_V$ are the sides of the polygons $\{\sg_m\}$. 
The Delaunay triangulation, associated with the Voronoi diagram, is the dual graph $\Ga_D$ to
$\Ga_V$, with vertices $\{S_m\}$ and
edges connecting vertices $S_m$, $S_l$ if and only if the cells $\sg_m$, $\sg_l$ have a common side.

Thus, in the absence of the constraint, we assign to each data center $S_{m}$ all users in the
cell $\sg_m$ of the Voronoi diagram. To describe the assignment in the presence of the constraint we
introduce hyperbolic Voronoi diagrams on the sphere.

Suppose that to each point $S_m$ a number $d_m$ is assigned. Then the {\it hyperbolic Voronoi diagram}
\[
V(d), \quad d=(d_1,\ldots, d_n),
\]
with the parameters $\{d_m\}$ and the points $\{S_m\}$, is defined as follows.  
The cell $\sg_m$ of the point $S_m$ is defined as 
\begin{equation}\label{cmp2}
\sg_m=\left\{x\in\La_R:\; d(x,S_m)+d_m\le d(x,S_l)+d_l\;\textrm{for all}\; l\not=m\right\}.
\end{equation}
Observe that on the curve $\ga_{ml}$ separating two neighboring cells $\sg_m$, $\sg_l$ we have the equation,
\begin{equation}\label{cmp3}
d(x,S_m)+d_m= d(x,S_l)+d_l,
\end{equation}
hence $\ga_{ml}$ is a part of the spherical hyperbola on the sphere $\La_R$. A vertex $v$ of the 
hyperbolic Voronoi diagram is a point which belong to three or more cells. The graph $\Ga_V(d)$
of the hyperbolic Voronoi diagram $V(d)$ consists of the vertices $\{v\}$ and the edges $\{e\}$,
which are the curves $\{\ga_{ml}\}$ separating neighboring cells.

A solution to the constrained minimization problem can be obtained as follows. 
We will take a natural assumption that the total capacity
of all data centers is not less than the number of the users, 
\begin{equation}\label{cmp4}
\sum_{m=1}^n C_m\ge \mu\La_R.
\end{equation}
Otherwise, it would be impossible to service all the users.

\begin{theo} \label{thm1} For any measure $\mu$, a minimizer $j_0(U)$ to constrained minimizing
problem \eqref{in5} exists, which can be obtained 
as follows.
There exist numbers $(d_1,\ldots, d_n)$ such that the minimizer $j_0$
 is obtained by assigning all users in the cell $\sg_m$ of 
the hyperbolic Voronoi diagram $V(d)$ to the data center $S_m$,
so that
\begin{equation}\label{cmp5}
j_0(U)=m \quad \textit{if and only if}\quad U\in \sg_m.
\end{equation}
\end{theo} 

\begin{proof} For any $\ep>0$, we will introduce a regularization $\mu_\ep(U)$ of 
the measure $\mu(U)$, and we will first prove Theorem \ref{thm1} for the measure
$\mu_\ep(U)$ and then we will take the limit $\ep\to 0$. The measure $\mu_\ep(U)$
is defined as follows.
Consider the function
\begin{equation}\label{cmp6}
g_\ep(x)= e^{-x^2/\ep}, 
\end{equation}
 on the real line,
and for a given point $U_0\in\La_R$, introduce the probability density $f_\ep(U;U_0)$ on the sphere $\La_R$ as
\begin{equation}\label{cmp7}
f_{\ep}(U;U_0)=\frac{1}{Z(\ep)} \,g_\ep\left(d(U,U_0)\right)>0
\end{equation}
where 
\begin{equation}\label{cmp8}
Z(\ep)=\int_{\La_R} \,g_\ep\left(d(U,U_0)\right)\,dU,
\end{equation}
Then, as $\ep\to 0$, the function $f_{\ep}(U;U_0)$ converges to a $\de$-function at the point $U_0$.
Consider the function
\begin{equation}\label{cmp9}
p_\ep(U)=\int_{\La_R} \,f_\ep(U;U_0)d\mu(U_0)>0,
\end{equation}
and define the measure $\mu_\ep(U)$ as
\begin{equation}\label{cmp10}
d\mu_\ep(U)=p_\ep(U)dU.
\end{equation}
Then in the weak sense,
\begin{equation}\label{cmp11}
\lim_{\ep\to 0} \mu_\ep=\mu,
\end{equation}
so that for any continuous function $f(U)$,
\begin{equation}\label{cmp11a}
\lim_{\ep\to 0} \int_{\La_R} \,f(U)\,d\mu_\ep(U)=\int_{\La_R} \,f(U)\,d\mu.
\end{equation}

Consider the minimization problem
\begin{equation}\label{cmp12}
\min_{j\in \mathcal J(C)} \mathcal D_{\ep}(j)=\mathcal D_{\ep}(j_\ep),
\end{equation}
for the functional
\begin{equation}\label{cmp13}
\mathcal D_\ep(j)=\int_{\La_R} d(S_{j(U)},U)\,d\mu_\ep(U).
\end{equation}

\begin{lem} \label{lem1}A minimizing function $j_\ep$ for minimization problem \eqref{cmp12} exists.
\end{lem}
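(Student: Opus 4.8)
The plan is to realize the minimization over the constraint set $\mathcal J(C)$ as a minimization over a compact set in a suitable topology, and then invoke lower semicontinuity of $\mathcal D_\ep$. The key observation is that, because $\mu_\ep$ is absolutely continuous with a strictly positive continuous density $p_\ep(U)$, an assignment function $j$ is determined up to $\mu_\ep$-null sets by the measurable partition $\pi(j)=\{\sg_1(j),\ldots,\sg_n(j)\}$ it induces, and the functional depends only on this partition via \eqref{cmp3}–type formulas: $\mathcal D_\ep(j)=\sum_{m=1}^n\int_{\sg_m(j)} d(S_m,U)\,p_\ep(U)\,dU$. So I would first reformulate problem \eqref{cmp12} as the problem of minimizing over all ordered measurable partitions $(\sg_1,\ldots,\sg_n)$ of $\La_R$ subject to $\int_{\sg_m} p_\ep(U)\,dU\le C_m$ for each $m$.

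Second, I would encode a partition by the vector of indicator functions $\chi=(\chi_1,\ldots,\chi_n)$, $\chi_m=\mathbf 1_{\sg_m}$, viewed as an element of $L^\infty(\La_R)^n$ with $\chi_m\ge 0$ and $\sum_m\chi_m=1$ a.e. The set $\mathcal K$ of such vectors is convex, bounded, and closed in the weak-$*$ topology of $L^\infty(\La_R)^n=\big(L^1(\La_R)^n\big)^*$, hence weak-$*$ compact by Banach–Alaoglu (the condition $\sum_m\chi_m=1$ is weak-$*$ closed since it is linear; the bounds $0\le\chi_m$ pass to weak-$*$ limits by testing against nonnegative $L^1$ functions). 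The capacity constraints $\int_{\La_R}\chi_m\,p_\ep\,dU\le C_m$ are likewise preserved under weak-$*$ limits, since $p_\ep\in L^1(\La_R)$ (indeed $p_\ep$ is continuous on the compact sphere, hence bounded). Thus the admissible set $\mathcal K_C\subset\mathcal K$ is weak-$*$ compact. The functional becomes $\mathcal D_\ep(\chi)=\sum_{m=1}^n\int_{\La_R}\chi_m(U)\,d(S_m,U)\,p_\ep(U)\,dU$, which is \emph{linear} in $\chi$ and weak-$*$ continuous because each $U\mapsto d(S_m,U)\,p_\ep(U)$ lies in $L^1(\La_R)$ (distance is bounded by $\pi R$ and $p_\ep$ is integrable). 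A continuous function on a compact set attains its minimum, so a minimizing $\chi^*\in\mathcal K_C$ exists.

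Finally I would recover an actual assignment function from the minimizer: given $\chi^*$, the sets $\tilde\sg_m=\{\chi^*_m=1\}$ need not partition $\La_R$ (the weak-$*$ limit may take fractional values), so I would instead argue directly on partitions. The cleanest route is to note that $\mathcal D_\ep$ is affine and the constraints are affine, so the infimum over $\mathcal K_C$ equals the infimum over the extreme points of $\mathcal K_C$; but more simply, one observes that for \emph{any} $\chi\in\mathcal K_C$ one can define $j_\chi(U)=m$ whenever $U$ falls in a measurable choice of a partition refining the supports, and a measurable selection gives an assignment function with the same or smaller value of $\mathcal D_\ep$ and satisfying the constraints — concretely, order the indices and set $j(U)$ to be the least $m$ with $U$ in a fixed representative of $\sg_m$. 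The honest way to avoid the fractional-value issue is to work from the start with genuine partitions and exploit that the map partition $\mapsto \chi$ has weak-$*$ closed image within $\mathcal K$ \emph{after} intersecting with $\{0,1\}$-valued functions; but $\{0,1\}$-valuedness is not weak-$*$ closed. I expect this last point — passing from a weak-$*$ limit back to a bona fide assignment — to be the main obstacle, and the remedy is a rounding/bang-bang argument: since $\mathcal D_\ep$ is linear and the feasible polytope $\mathcal K_C$ is cut out by finitely many linear inequalities (the $n$ capacity bounds) together with the affine relation $\sum_m\chi_m=1$, a minimizer can be taken at an extreme point, and extreme points of this set are exactly the $\{0,1\}$-valued $\chi$, i.e.\ genuine partitions — this is a Lyapunov-type convexity/extreme-point argument using the non-atomicity of Lebesgue measure on $\La_R$. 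That extreme point yields the desired minimizing function $j_\ep$.
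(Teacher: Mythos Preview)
Your argument is correct and, at the skeletal level, matches the paper's: both prove existence by a weak-compactness argument, using that $p_\ep$ is continuous (hence in $L^1$) so that the functional and the capacity constraints pass to weak limits. The paper simply takes a minimizing sequence of partitions $\pi(j_\ep^k)$, invokes an unspecified ``compactness argument'' to extract a weakly convergent subsequence, and declares the limit to be a partition $\pi(j_\ep)$; the continuity of $d(\cdot,\cdot)$ and $p_\ep$ then gives that $j_\ep$ is a minimizer.

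Where you differ---and improve---is in confronting the point the paper elides: a weak-$*$ limit of indicator vectors need not be $\{0,1\}$-valued. Your remedy (relax to $\mathcal K_C$, observe the functional is linear and weak-$*$ continuous, then use Krein--Milman/Bauer together with a Lyapunov non-atomicity argument to see that extreme points of $\mathcal K_C$ are genuine partitions) is the standard and correct way to close this gap. The perturbation you need is exactly the one you sketch: on a set where two components $\chi^*_m,\chi^*_l$ are both strictly between $0$ and $1$, split it into two pieces of equal $p_\ep$-mass and shift mass between $m$ and $l$ with opposite signs on the two pieces; this preserves $\sum_m\chi_m=1$, all $n$ capacity integrals, and the box constraints, showing a non-$\{0,1\}$ point is not extreme. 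So your route buys a rigorous derelaxation step that the paper's proof leaves implicit.
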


\begin{proof} Let
\begin{equation}\label{cmp14}
\inf_{j\in \mathcal J(C)} \mathcal D_{\ep}(j)=\lim_{k\to\infty}\mathcal D_{\ep}(j_\ep^k).
\end{equation}
Consider the partitions $\pi(j_\ep^k)$. By the compactness argument, there exists a weakly
converging subsequence
\begin{equation}\label{cmp15}
\lim_{r\to\infty}\pi(j_\ep^{k_r})=\pi(j_\ep).
\end{equation}
Since $d(U,U_0)$ is a continuous function of $(U,U_0)$ and $p_\ep(U)$ is
a continuous function of $U$, this implies that $j_\ep$ solves
\eqref{cmp12}.
\end{proof}

\begin{lem} \label{lem2} Let $j_\ep$ be a solution to \eqref{cmp12} and  
\begin{equation}\label{pi1}
\pi(j_\ep):\;\La_R=\bigsqcup_{m=1}^n \sg_m(j_\ep)
\end{equation}
the corresponding partition of $\La_R$.
Then for any $m\not= l$,   
\begin{equation}\label{pi2}
\underset{U\in \sg_m(j_\ep)}{\essinf}\;[ d(U,S_l)- d(U,S_m)]\ge 
\underset{U\in \sg_l(j_\ep)}{\esssup}\;[ d(U,S_l)- d(U,S_m)].
\end{equation}
\end{lem}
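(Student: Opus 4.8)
The plan is to prove \eqref{pi2} by a mass-exchange (swapping) argument, using the fact that the regularized measure $\mu_\ep$ has a strictly positive continuous density $p_\ep$, hence is non-atomic and equivalent to the surface measure on $\La_R$. Fix $m\neq l$ and set $h(U)=d(U,S_l)-d(U,S_m)$, a bounded continuous function on $\La_R$. Suppose, for contradiction, that \eqref{pi2} fails. Since the left-hand side is $+\infty$ whenever $\mu_\ep(\sg_m(j_\ep))=0$ (and the right-hand side is $-\infty$ whenever $\mu_\ep(\sg_l(j_\ep))=0$), both cells must then carry positive $\mu_\ep$-measure, and there is a number $t$ with
\[
\underset{U\in\sg_m(j_\ep)}{\essinf}\,h(U)\;<\;t\;<\;\underset{U\in\sg_l(j_\ep)}{\esssup}\,h(U).
\]
By the definitions of essential infimum and supremum, the sets $A=\{U\in\sg_m(j_\ep):h(U)<t\}$ and $B=\{U\in\sg_l(j_\ep):h(U)>t\}$ then have positive $\mu_\ep$-measure.

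Next I would use the non-atomicity of $\mu_\ep$ to choose measurable subsets $A'\subseteq A$ and $B'\subseteq B$ with $\mu_\ep(A')=\mu_\ep(B')=\de$ for some $\de$ with $0<\de\le\min\{\mu_\ep(A),\mu_\ep(B)\}$, and define a new assignment $j'$ that agrees with $j_\ep$ outside $A'\cup B'$, sends the users of $A'$ to the center $S_l$, and sends the users of $B'$ to the center $S_m$. Because $\mu_\ep(A')=\mu_\ep(B')$, every cell keeps the same $\mu_\ep$-measure, so $j'\in\mathcal J(C)$; this equal-mass requirement is precisely why the two swapped pieces must be chosen with the same measure. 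Only the users in $A'\cup B'$ change their contribution to the cost, so
\[
\mathcal D_\ep(j')-\mathcal D_\ep(j_\ep)=\int_{A'}h\,d\mu_\ep-\int_{B'}h\,d\mu_\ep .
\]
Since $h<t$ $\mu_\ep$-a.e. on $A'$, a set of positive $\mu_\ep$-measure $\de$, we get $\int_{A'}h\,d\mu_\ep<t\de$; likewise $\int_{B'}h\,d\mu_\ep>t\de$. Hence $\mathcal D_\ep(j')-\mathcal D_\ep(j_\ep)<t\de-t\de=0$, contradicting the minimality of $j_\ep$ in \eqref{cmp12}. This establishes \eqref{pi2}.

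The verifications that $\mathcal D_\ep$ is finite (because $d$ is bounded on the compact sphere and $\mu_\ep$ is a finite measure) and that $\essinf$/$\esssup$ are unambiguous (because $\mu_\ep$ is equivalent to surface measure) are routine. The only points demanding some care are the preservation of the capacity constraint, which forces the equal-mass swap, and the strictness of the inequalities $\int_{A'}h\,d\mu_\ep<t\de$ and $\int_{B'}h\,d\mu_\ep>t\de$, which holds because $t-h>0$ (respectively $h-t>0$) a.e. on a set of positive $\mu_\ep$-measure, so the integral of this positive function is strictly positive. I do not anticipate a genuine obstacle in this lemma; the substantive work of Theorem \ref{thm1} — letting $\ep\to0$ and recognizing the limiting partition as the hyperbolic Voronoi diagram $V(d)$ — lies in the steps that follow.
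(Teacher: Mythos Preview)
Your argument is correct and is essentially the same mass-exchange (swap) proof that the paper gives: assume the inequality fails, locate positive-measure subsets of $\sg_m(j_\ep)$ and $\sg_l(j_\ep)$ on the wrong sides of a threshold, trim them to equal $\mu_\ep$-mass using non-atomicity, and swap their assignments to strictly decrease $\mathcal D_\ep$ while preserving the constraints. The only cosmetic differences are that the paper uses two thresholds $a>b$ rather than your single $t$, and that you are more explicit than the paper about why equal-mass subsets exist (non-atomicity of $\mu_\ep$, coming from the strictly positive density $p_\ep$) and about the degenerate cases where one of the cells is $\mu_\ep$-null.
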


\begin{proof} Suppose, for the sake of contradiction, that \eqref{pi2} is false. Then 
there exist sets $A\subset \sg_l(j_\ep)$, $B\subset \sg_m(j_\ep)$  and numbers $a>b$ such that
\begin{enumerate}
\item $ d(U,S_l)- d(U,S_m)>a$ for all $U\in A$,
\item $ d(U,S_l)- d(U,S_m)<b$ for all $U\in B$,
\item $\mu_\ep(A)=\mu_\ep(B)>0$.
\end{enumerate}
Let us reassign all users $U\in A$ from $S_l$ to $S_m$ and all users $U\in B$ 
from $S_m$ to $S_l$. This reassignment, $j_\ep^R$,
will not violate the constraints and it will decrease the total distance functional $\mathcal D(j)$.
Indeed,
\begin{equation}\label{cmp13a}
\begin{aligned}
\mathcal D_\ep(j_\ep)-\mathcal D_\ep(j_\ep^R)&=\int_{A} [d(U,S_l)- d(U,S_m)]\,d\mu_\ep(U)
-\int_{B} [d(U,S_l)- d(U,S_m)]\,d\mu_\ep(U)\\
&\ge a\mu_\ep(A)-b\mu_\ep(B)=(a-b)\mu_\ep(A)>0,
\end{aligned}
\end{equation}
hence $\mathcal D(j_\ep)$ is not the minimum of $\mathcal D(j)$. This contradiction proves \eqref{pi2}.
\end{proof}

It follows from \eqref{pi2} that there exist sets $\{E_m,\;m=1,\ldots,n\}$ such that 
\begin{enumerate}
\item $ E_m\subset \sg_m(j_\ep)$ for  $m=1,\ldots,n$,
\item $\mu_\ep \left(\sg_m(j_\ep)\setminus E_m\right)=0$ for $m=1,\ldots,n$,
\item for any $m\not= l$,
\begin{equation}\label{pi3}
\underset{U\in E_m}{\inf}\;[ d(U,S_l)- d(U,S_m)]\ge 
\underset{U\in E_l}{\sup}\;[ d(U,S_l)- d(U,S_m).
\end{equation}
\end{enumerate}

Consider the closure $F_m$ of each set $E_m$, 
\begin{equation}\label{pi4}
F_m=\overline{E_m},\qquad m=1,\ldots,n.
\end{equation}
Then, since $\{\sg_m(j_\ep)\}$ is a partition of the sphere $\La_R$, we have that
\begin{equation}\label{pi5}
\La_R=\bigcup_{m=1}^n F_m.
\end{equation}
By continuity, \eqref{pi3} holds if we replace $E_m,E_l$ for $F_m,F_l$, 
respectively, that is for any $m\not= l$,
\begin{equation}\label{pi6}
\underset{U\in F_m}{\inf}\;[ d(U,S_l)- d(U,S_m)]\ge 
\underset{U\in F_l}{\sup}\;[ d(U,S_l)- d(U,S_m).
\end{equation}
Moreover, if the intersection of $F_m$ and $F_l$ is nonempty,
\begin{equation}\label{pi7}
\ga_{ml}\equiv F_m\cup F_l\not=\emptyset,
\end{equation}
then
\begin{equation}\label{pi8}
\underset{U\in F_m}{\inf}\;[ d(U,S_l)- d(U,S_m)]= 
\underset{U\in F_l}{\sup}\;[ d(U,S_l)- d(U,S_m)\equiv d_{ml}.
\end{equation}
This implies that for $U\in\ga_{ml}$,
\begin{equation}\label{pi9}
 d(U,S_l)- d(U,S_m)= d_{ml},
\end{equation}
Thus, $\ga_{ml}$ is a part of  hyperbola \eqref{pi9} on the sphere $\La_R$. 
Since the function $d(U,S_m)$ is an analytic function of $U$ outside of the point $S_m$,
it follows that $\ga_{ml}$ consists of a finite number of arcs of  hyperbola \eqref{pi9}.  
Moreover, the boundary of $F_m$ is equal to
\begin{equation}\label{pi10}
\partial F_m=\bigcup _{l\not=m} (F_m\cap F_l),
\end{equation}
hence $\partial F_m$ consists of a finite number of arcs of hyperbolas on the sphere $\La_R$.

Assume that for some $m$ the sets $\ga_{ml}$ and $\ga_{mp}$ intersect at some point $v$. Then $v$ belongs
to the boundaries of $F_l$ and $F_p$, hence 
$\ga_{lp}$ passes through $v$, that is $v$ is a triple point (or higher). By taking $U=v$ in \eqref{pi9},
we obtain that 
\begin{equation}\label{pi11}
d_{ml}+d_{lp}+d_{pm}=0.
\end{equation}
Define the one-chain 
\begin{equation}\label{pi12}
\mathcal C_1=\sum_{m,l}d_{ml}\ga_{ml}.
\end{equation}
Then by \eqref{pi11},
\begin{equation}\label{pi13}
\partial \mathcal C_1=0.
\end{equation}
Since the first homology group of the sphere $\La_R$ is trivial, this implies that there exists 
a zero-chain $\mathcal C_0$ such that 
\begin{equation}\label{pi14}
\partial \mathcal C_0=\mathcal C_1.
\end{equation}
In other words,  there exist numbers $\{d_m\}$
such that 
\begin{equation}\label{pi15}
d_{ml}=d_m-d_l.
\end{equation}
By \eqref{pi9}, this proves \eqref{cmp3} and hence Theorem \ref{thm1} for the distribution $\mu_\ep$ of users. 
Let us consider the distribution $\mu$.

The numbers $\{d_m\}$ are defined uniquely up to a common additive constant. To fix the constant, let us put $d_1=0$.
Then the other numbers, $d_m=d_m(\ep)$, $m>1$, are bounded by a constant independent of $\ep$. Let us take a converging 
subsequence 
\begin{equation}\label{pi16}
d_m=\lim_{\ep_p\to 0} d_m(\ep_p),\quad p=1,2,\ldots;\qquad m=1,\ldots,n.
\end{equation}
This defines the numbers $\{d_m\}$. Consider the corresponding hyperbolic Voronoi diagram $V(d)$.
Since $V(d(\ep))$ gives a solution to the minimization problem for the users distribution $\mu_\ep$,
we obtain that $V(d)$ gives a solution for the distribution $\mu$. 
This proves Theorem \ref{thm1}.   
\end{proof}

{\it Remark.} Observe that, in general, the hyperbolic Voronoi diagram solving the minimization problem {\it is not unique}.
Consider, for instance, a discrete distribution of users
\begin{equation}\label{pi17}
d\mu(U)=\sum_{k=1}^N \de(U-U_k)\,dU.
\end{equation}
Then, if there are no users on the edges of the Voronoi diagram, then we can shift the numbers 
$d_1,\ldots,d_n$ a little, without changing the number of users in the Voronoi cells.
On the other hand, the  solution to the minimization problem {\it is unique} for any users distribution which
has an absolutely continuous component with a positive density, like $\mu_\ep$. This explains why
we used the distribution $\mu_\ep$ in the above proof, establishing the theorem first for $\mu_\ep$ and
then passing to the limit $\ep\to 0$.

\section{Computer simulations of the minimizer}

Consider a minimizer $j_0$ of \eqref{in5}. We say that constraint \eqref{in8} is active on a server $S_m$ if 
\begin{equation}\label{cs1}
\mu \{U:\; j_0(U)=m\}=C_m,
\end{equation}
and it is inactive on $S_m$ if
\begin{equation}\label{cs2}
\mu \{U:\; j_0(U)=m\}<C_m.
\end{equation}
If all constraints are inactive then the minimizer $j_0$ is the usual Voronoi diagram on the sphere.

In the computer simulations we begin with the usual Voronoi diagram. To that end, 
we construct the convex hull $\mathcal H$ of the set $\mathcal S$. This can be done
in $\mathcal O(n\ln n)$ operations. The convex hull gives the Delaunay triangulation,
and the spherical bisectors of the edges of the Delaunay triangulation give the edges of the Voronoi diagram
(see, e.g., \cite{bro1}, \cite{bro2}, \cite {naleeche}). We check the condition 
\begin{equation}\label{cs3}
\mu (\sg_m)\le C_m,
\end{equation}
for the Voronoi cells $\sg_m$, $m=1,\ldots, n$. If the condition is satisfied for all $m$'s, then 
the minimizer is the usual Voronoi diagram. 

\begin{figure}   [htp]
\centering
\includegraphics[width=160mm]{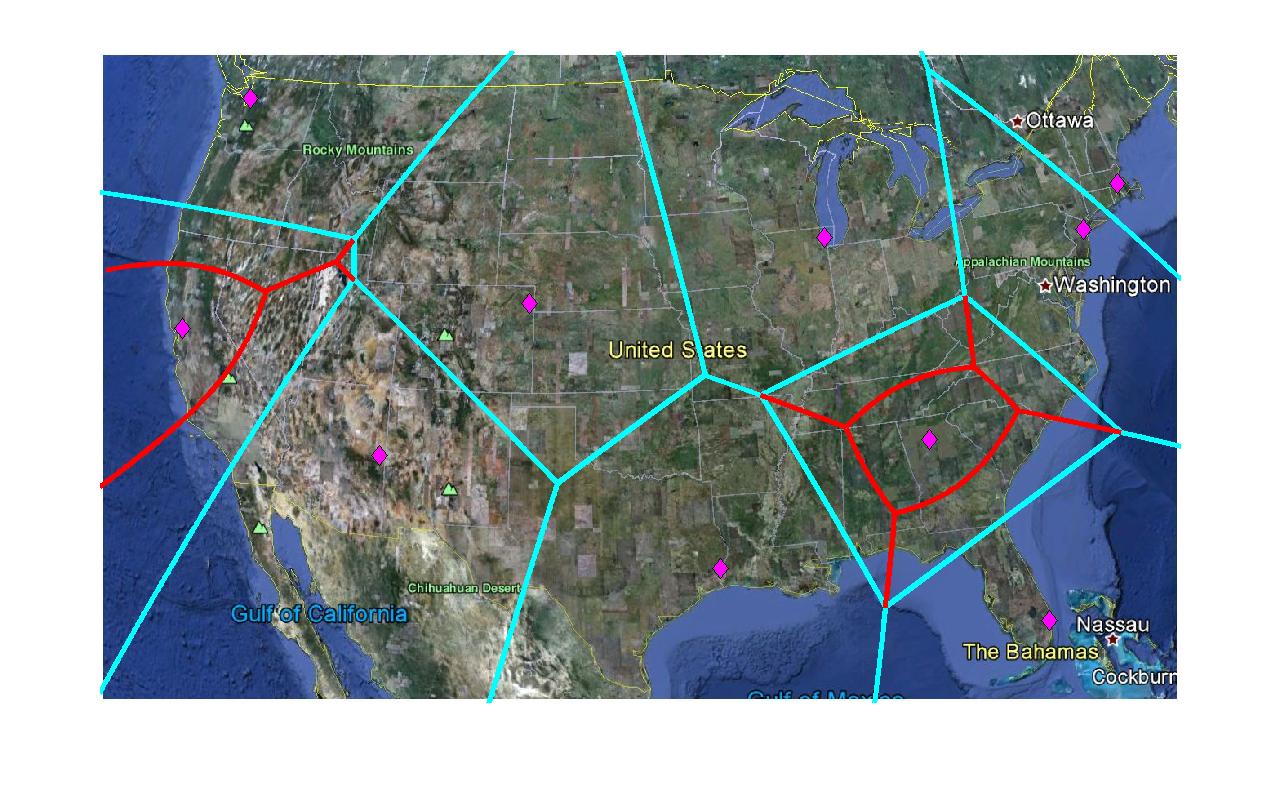}
\caption[sectors ]{The solution to the constrained minimization problem with 10 data centers, 
Seattle, Atlanta, New York, Phoenix, San Francisco, Denver, Houston, Chicago, Boston, and Miami.
The Voronoi Diagram on the sphere with these 10 points is shown by the cyan line. 
We impose two constraints shown in Red: in San Francisco the capacity 
is equal to 15 million users and in Atlanta it is 20 million.}
\label{fig:SimN}
\end{figure}

In application to cloud computing, it is plausible to assume that most of the servers have
a big capacity, and condition \eqref{cs3} violates only on a set $\mathcal S_0$ of isolated $S_m$'s.
To set a notation, let
\begin{equation}\label{cs4}
\mathcal S_0=\{S_m,\; m\in I_0\},
\end{equation}
where $I_0\subset\{1,\ldots,n\}$.
In this case we are looking for parameters $\{d_m,\;m\in I_0\}$ such that  
\begin{equation}\label{cs5}
\mu (\sg_m(d))= C_m,\quad m\in I_0, 
\end{equation}
where $\sg_m(d)$ is the cell of the hyperbolic Voronoi diagram $V(d)$, with given $d_m$ when $m\in I_0$ and 
$d_m=0$ when $m\not\in I_0$. We solve equation \eqref{cs5} numerically for each $m\in I_0$. We assume that
the capacity of the neighboring vertices to $S_m\in\mathcal S_0$ is big so that the resulting
hyperbolic Voronoi diagram $V(d)$ satisfies condition \eqref{cs3} for all $m$. 

In Fig.~\ref{fig:SimN} we illustrate the numerical solution by an example. In this example, we consider 10 data centers located in
Seattle, San Francisco, Phoenix, Denver, Chicago, Houston, Atlanta, Boston, New York, and Miami. We assume that
the distribution of users is proportional to the population distribution. The blue lines show the spherical
Voronoi diagram. We further assume that the data centers in San Francisco and Atlanta have a limited capacity
equivalent to the population of 15 and 20 million, respectively. We calculate the hyperbolic Voronoi diagram 
under these constraints. The change from the usual Voronoi diagram to the hyperbolic one is shown 
in Fig.~\ref{fig:SimN} in red. 


\section{Conclusion}

In this work we studied the minimization problem for the total communication distance in a computer
cloud under the condition of restricted capacity of the data centers. 
Our main result is Theorem \ref{thm1}, which shows that a solution to the minimization problem 
is given by a hyperbolic Voronoi diagram constructed on the data centers
$S_1,\ldots,S_n$. The parameters $d_1,\ldots,d_n$ of the hyperbolic Voronoi diagram can be found from
the condition that the number of users in each cell $\sg_m$ of the diagram does not exceed the capacity
of the corresponding data center $S_m$.

Although we discuss the application to the computer cloud only, the hyperbolic Voronoi diagrams on the sphere
can find other important applications. We can mention the problem of location of air-bases \cite{OBS}, the 
distribution of facilities in global Internet companies like Amazon.com, the distribution of the 
telecommunication centers for mobile telephones in global telephone companies, data collection centers,
and others. Also, theorem \ref{thm1} can be extended, under appropriate conditions, 
to Riemannian manifolds in dimension 2 and higher.

\newpage

\end{document}